\def\I{{\mathbb I}}
\newtheorem{thm}{Theorem}
\newtheorem{Lemma}{Lemma}
\newtheorem{Assumption}{Assumption}
\title{Learning with many experts:  model selection and sparsity}
\author{
Rafael Izbicki and Rafael Bassi Stern \footnote{Department of Statistics, Carnegie Mellon University}}
\date{\today}
\newcounter{tableNotation}
\newcounter{SModels}
\newcounter{simulated}
\newcounter{wine}
\newcounter{ion}
\newcounter{astroLambda}
\begin{document}
\maketitle 

\emph{This is the pre-peer reviewed version of the following article: Izbicki, R., Stern, R. B. `` Learning with many experts: Model selection and sparsity." Statistical Analysis and Data Mining 6.6 (2013): 565-577., which has been 
published in final form at \url{http://onlinelibrary.wiley.com/doi/10.1002/sam.11206/full}}

\abstract{
Experts classifying data are often imprecise. Recently, several models 
have been proposed to train classifiers using the noisy labels generated 
by these experts. How to choose between 
these models? In such situations, the true labels are 
unavailable. Thus, one cannot perform model selection using the standard versions of methods such as empirical risk 
minimization and cross validation. In order to allow model selection, we present a surrogate loss and 
provide theoretical guarantees that assure its consistency.  Next, we discuss how this loss can be used to tune a penalization which introduces sparsity in 
the parameters of a traditional class of models. Sparsity provides more parsimonious models and can avoid 
overfitting. Nevertheless, it has seldom been discussed in the context of noisy labels due to the difficulty in 
model selection and, therefore, in choosing tuning parameters. We apply these techniques to several sets of simulated and real data.
}

\section{Introduction}

In many situations, getting reliable labels in a dataset is very expensive and therefore assigning highly trained experts to do such tasks is undesirable. In other situations, even trained experts disagree about the labels of the data. Cases like these include spam detection, diagnosis 
of patients based on images and morphological classification of galaxies \cite{galaxyZoo}. Although it might be expensive to train experts to a degree one can trust their labels \cite{joey}, systems such as Amazon Mechanical Turk \cite{schulze:exploring} allow each sample unit to be classified by many (not necessarily perfect) experts by a reasonably small cost. These experts do not have to be people. For instance, they can be different cheap screening tests in a medical problem \cite{diagnostic}. In these situations, it is desirable to have methods that can detect how reliable each expert is and use this information not only to 
detect the adequate labels of the data but also to train accurate classifiers to predict new data \cite{costSensitive}. These methodologies are usually called crowdsourcing methods. Here we focus on predicting binary variables, even though similar ideas can be used in the case of predicting a categorical variable with more than two labels.

The most common approach to deal with multiple experts is to first consider a majority vote scheme to input the labels of each sample unit. Such procedure is known to be suboptimal in many situations \citep{Yan2010}. Many other approaches have been proposed recently. While some of them are based on a two step procedure of first trying to find the true labels in the data and then training classifiers based on them \cite{Chittaranjan, crowd} others do these tasks simultaneously, that is, the classifier is trained by assuming that the labels from the experts may be incorrect \cite{Raykar2010, Yan2010}. We follow the latter approach, even though the model selection technique we propose works for the former method as well.

Also, many of the existing methods are essentially algorithm-based \cite{Donmez}. However, a significant amount of the recent methods 
consist of probabilistic approaches to this problem, in which the unobserved true label is modeled as 
a latent variable \cite{Welinder, Provost, Raykar2010, Yan2010,Yan10,AAAIW125257}. In the latter case, the parameters of 
the model are usually estimated through the Expectation Maximization Algorithm \cite{EM}. This approach has roots on \citep{dawid}. However, 
less emphasis has been given to develop ways of comparing these different models. Since the  usual techniques for model selection depend on observing 
the real labels of the data, they cannot be used in this case.  \cite{Lam:Stork:03} discusses how to find good models when only one 
annotator is available . Here we extend some of these results and relax some of the assumptions made.
We take a predictive approach: by good models
we mean models that have low predictive errors. 

The literature also lacks on methods that can build sparse (in terms of coefficients of the model related to the features) classifiers in crowdsourcing methods. Sparsity is a useful tool when trying to build classifiers that have good generalization properties, that is, that do not suffer from 
overfitting. Moreover, many common models used for crowdsourcing have a number of parameters that grow with both the number of experts and samples. Having too many parameters can increase the prediction error substantially. Introducing sparsity on such classifiers leads to more parsimonious models that potentially have better performance.
Bayesian methods such as the one used by \cite{Raykar2010} can lead to shrinkage of the coefficients and therefore to better prediction errors, however 
it is not clear how to choose prior hyperparameters on them when one aims at good prediction errors. Sparse methods are also valuable
because they can reduce costs: for example, in new samples a smaller number of variables have to be measured.

In Section \ref{modelSelection} we develop a method for model 
selection. In Section \ref{model} we present a model which allows sparse solutions. We also show how to fit the model parameters for a fixed value of the parameter which specifies the amount of sparsity. Section \ref{applications} provides applications
of both the model selection technique and the sparse model we propose.
In particular, we use our model selection technique to select the tuning parameter which induces the classifier with best predictive errors

\section{Model Selection}
\label{modelSelection}

Assume there are $d$ experts that label $n$ sample units. For each of these units, we measure $k$ features. $X_{ij}$ is the $j$-th feature of the $i$-th unit. 
$X_{i}$ denotes the vector of all features for the $i$-th sample. $Y_{ir}\in \{0,1\}$ denotes the label attributed by the $r$-th expert (annotator) to the $i$-th 
sample. $Z_i \in \{0,1\}$ is the unobserved variable which corresponds to the appropriate label for the $i$-th sample unit. 
Table \arabic{tableNotation} contains a summary of this notation.

 \begin{table}[!ht]
 \caption{Model's random variables}
 \centering
 \begin{tabular}{|c|ccc|ccc|} \hline
 		True Labels 		& \multicolumn{3}{|c|}{Experts' Labels}& \multicolumn{3}{c|}{Features} \\ \hline \hline
 		$Z_1$ 					& $Y_{11}$ & \ldots & $Y_{1d}$ 				 & $X_{11}$ & \ldots & $X_{1k}$ \\ 
 		\vdots &\vdots  &	$\ddots$ &\vdots 										 &\vdots &$\ddots$ & \vdots \\ 
 		$Z_n$ 					& $Y_{n1}$ & \ldots & $Y_{nd}$				 & $X_{n1}$ & \ldots & $X_{nk}$ \\ \hline
 \end{tabular}
 \label{tableNotation}
 \end{table}

We assume one wishes to find a classifier which minimizes the 0-1 loss. That is, one is interested in finding a classifier that has small probability of making a mistake on a new sample. In this 
case, many techniques of model selection rely on calculating empirical errors on a test data set \cite{Hast:Tibs:Frie:2001}. When using noisy labels, the empirical error is unavailable and this strategy cannot be directly applied. In order to overcome this difficulty, we introduce a score which is closely related to the empirical error. The reliability of this score does not depend on assuming that the data is generated according to the model in Section \ref{model}. 

Our score is based on splitting the data into a training set and testing set, $(X^{test}_{i},Y^{test}_{i})_{1 \leq i \leq n'}$. If this cannot be done due to a small sample size, one can use a cross validated version of it \cite{Hast:Tibs:Frie:2001}. 
Consider a set of models, $\Lambda$. For example, this set can be composed of all models generated by different $\lambda$ values in the model presented in Section \ref{model}. It could also be the set of models fit with different subsets of the features or it could even contain different models such as those obtained using majority vote to input the labels or models such as in \cite{Raykar2010}. 

For each $\lambda \in \Lambda$, we train the model using the training set. Call $z^{\lambda}$ the classifier $\lambda$ obtained from the training data. Through model selection we wish to find $\lambda \in \Lambda$ with the smallest risk. Define the risk of $\lambda$ as $R(\lambda)=E[I(Z\neq z^{\lambda}(X))]$, that is, the probability of a new sample unit being misclassified by the classifier $\lambda$. Let $n$ be the sample size in the testing data set. We use $z_{i}^{\lambda}$ as shorthand for $z^{\lambda}(X_{i}^{test})$. 
The (in practice incalculable) empirical risk of model $\lambda$ is $\hat{R}(\lambda) = \frac{1}{n'}\sum_{i=1}^{n'}{\I(z^{\lambda}_i \neq Z_{i} )}$. 
For each $\lambda \in \Lambda$, we score how bad $z^{\lambda}$ performs through $\hat{S}$,
\[ \hat{S}(\lambda) =\frac{1}{n'} \sum_{i=1}^{n'}{\frac{1}{d}\sum_{j=1}^{d}{I(z_{i}^{\lambda} \neq Y^{test}_{i,j})}}, \]

and select the model $\lambda^{*}$ such that
$$ \lambda^{*} = \arg\min_{\lambda \in \Lambda} \hat{S}(\lambda) $$

We prove $\lambda^{*}$ is consistent (in the sense of asymptotically giving the same results as when minimizing the real risk
$R(\lambda)$) and provide an upper bound on its rate of convergence. In the following theorems, $VC(\Lambda)$ is the VC-dimension of $\Lambda$ and $D$ a universal constant defined in \cite{Vaart}.

\begin{Assumption}
\label{assumptionInd}
For every $i \neq j$, $(X_{i},Z_{i},(Y_{i,k})_{k=1}^{d})$ is independent of $(X_{j},Z_{j},(Y_{j,k})_{k=1}^{d})$.
\end{Assumption}

\begin{Assumption}
\label{assumptionError}
Let $\epsilon_{j}=P(Z \neq Y_{j})$ be the imprecision of expert $j$. $\bar{\epsilon}\equiv\frac{\sum_{j=1}^{d}{\epsilon_{j}}}{d} < \frac{1}{2}$.
\end{Assumption}

Assumption \ref{assumptionError} means that the label provided by an expert picked uniformly is better than the flip of a coin. We now consider two additional assumptions
and then prove that using $\hat{S}$ to perform model selection works under either of them.

\begin{Assumption}
\label{assumptionIndModelExpert}
For all $i$, $Cov\left(\frac{\sum_{j=1}^{d}{\left(1-2I(Z_{i} \neq Y^{test}_{i,j})\right)}}{d},I(z^\lambda_{i} \neq Z_{i})\right) = 0$.
\end{Assumption}

Assumption \ref{assumptionIndModelExpert} holds e.g. when the classifier and expert errors are unrelated, a condition that appears 
on \cite{Lam:Stork:03}. 

\begin{Assumption}
\label{assumptionIndExpertExpert}
For all $i$ and $j \neq j^{*}$, $Cov(I(Z_{i} \neq Y^{test}_{i,j}),I(Z_{i} \neq Y^{test}_{i,j^{*}})) = 0$.
\end{Assumption}

Assumption \ref{assumptionIndExpertExpert} holds e.g. when the errors of every two experts are unrelated.
 We prove the following Theorems:

\begin{thm}
	\label{modelSelectionMain}
	Under Assumptions \ref{assumptionInd} and \ref{assumptionIndModelExpert}, if $\Lambda$ is a VC-Class,

	\[ P(\sup_{\lambda \in \Lambda}|\hat{S}(\lambda)-(1-2\bar{\epsilon})R(\lambda)-\bar{\epsilon}| > \delta) \leq \left(\frac{D\sqrt{n'}\delta}{\sqrt{2VC(\Lambda)}}\right)^{2 VC(\Lambda)} e^{-2n'\delta^{2}}. \]
\end{thm}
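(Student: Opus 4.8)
The plan is to reduce the theorem to a uniform concentration inequality in two stages: first I would show that $\hat{S}(\lambda)$ is an exactly unbiased estimator of the centering quantity $(1-2\bar{\epsilon})R(\lambda)+\bar{\epsilon}$, so that the event in the statement becomes $\{\sup_\lambda|\hat{S}(\lambda)-E[\hat{S}(\lambda)]|>\delta\}$; then I would control the fluctuations of $\hat{S}(\lambda)$ around its mean uniformly over the VC class $\Lambda$.

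For the first stage, fix $\lambda$ and write $C_i=I(z^\lambda_i\neq Z_i)$ for the classifier's error and $M_i=\frac1d\sum_{j=1}^d I(Z_i\neq Y^{test}_{i,j})$ for the average expert error on unit $i$. The elementary identity for binary variables
\[ I(a\neq b)=I(a\neq c)+I(b\neq c)-2\,I(a\neq c)\,I(b\neq c), \]
applied with $a=z^\lambda_i$, $b=Y^{test}_{i,j}$, $c=Z_i$ and averaged over $j$, gives $\frac1d\sum_{j} I(z^\lambda_i\neq Y^{test}_{i,j})=C_i+M_i-2\,C_iM_i$. Taking expectations, $E[C_i]=R(\lambda)$ and $E[M_i]=\frac1d\sum_j\epsilon_j=\bar{\epsilon}$. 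Since $\frac1d\sum_j(1-2I(Z_i\neq Y^{test}_{i,j}))=1-2M_i$, Assumption \ref{assumptionIndModelExpert} states exactly that $Cov(M_i,C_i)=0$, so the cross term factorizes, $E[C_iM_i]=R(\lambda)\bar{\epsilon}$. Hence $E\big[\frac1d\sum_j I(z^\lambda_i\neq Y^{test}_{i,j})\big]=R(\lambda)+\bar{\epsilon}-2R(\lambda)\bar{\epsilon}=(1-2\bar{\epsilon})R(\lambda)+\bar{\epsilon}$, and averaging over the (identically distributed) test units yields $E[\hat{S}(\lambda)]=(1-2\bar{\epsilon})R(\lambda)+\bar{\epsilon}$.

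The second, and main, stage is the uniform deviation bound. The key observation is that the summand $g_i(\lambda):=\frac1d\sum_j I(z^\lambda_i\neq Y^{test}_{i,j})$ depends on $\lambda$ only through the single bit $z^\lambda(X_i^{test})\in\{0,1\}$, since $g_i(\lambda)=\bar{Y}_i+z^\lambda(X_i^{test})(1-2\bar{Y}_i)$ with $\bar{Y}_i=\frac1d\sum_j Y^{test}_{i,j}$. Consequently the function class $\mathcal{G}=\{(x,y)\mapsto g(\lambda;x,y):\lambda\in\Lambda\}$ is a fixed, $\lambda$-independent affine image of the indicator class $\{z^\lambda:\lambda\in\Lambda\}$, so its covering numbers are controlled by those of $\Lambda$, its VC index is $VC(\Lambda)$, and its members are bounded in $[0,1]$. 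By Assumption \ref{assumptionInd} the test summands are independent across $i$, so $\hat{S}(\lambda)-E[\hat{S}(\lambda)]$ is a bounded, independent empirical process indexed by a VC class. I would then invoke the Vapnik--Chervonenkis inequality of \cite{Vaart} (the source of the universal constant $D$), which for such a class delivers exactly
\[ P\Big(\sup_{\lambda\in\Lambda}|\hat{S}(\lambda)-E[\hat{S}(\lambda)]|>\delta\Big)\leq\Big(\tfrac{D\sqrt{n'}\,\delta}{\sqrt{2VC(\Lambda)}}\Big)^{2VC(\Lambda)}e^{-2n'\delta^2}, \]
finishing the proof after substituting the centering computed above.

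The step I expect to be the main obstacle is this empirical-process bound: one must verify carefully that passing from the classifier class $\Lambda$ to the derived class $\mathcal{G}$ does not inflate the complexity beyond $VC(\Lambda)$ (this is where the single-bit dependence through $z^\lambda(X_i^{test})$ is essential), that the $[0,1]$ envelope places us in the regime covered by the cited inequality, and that the constants align to produce the stated prefactor $\big(D\sqrt{n'}\delta/\sqrt{2VC(\Lambda)}\big)^{2VC(\Lambda)}$. By contrast, the bias computation is routine once the binary identity and Assumption \ref{assumptionIndModelExpert} are in hand, and Assumption \ref{assumptionError} ($\bar{\epsilon}<\tfrac12$) plays no role in this concentration statement — it is needed only later, to interpret the minimizer of $\hat{S}$ as a minimizer of $R$.
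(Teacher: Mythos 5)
Your proposal is correct and follows essentially the same route as the paper: your binary identity $I(a\neq b)=I(a\neq c)+I(b\neq c)-2I(a\neq c)I(b\neq c)$ is exactly the decomposition in the paper's Lemma \ref{scoreTransformation}, the zero-bias computation via $Cov(M_i,C_i)=0$ is the (implicit) step the paper uses to center $\hat{S}$ at $(1-2\bar{\epsilon})R(\lambda)+\bar{\epsilon}$ under Assumption \ref{assumptionIndModelExpert}, and the covering-number argument reducing the derived class to the VC class $\Lambda$ is the content of Lemma \ref{uniformConvergence}. You even make explicit a point the paper glosses over, namely that under Assumption \ref{assumptionIndModelExpert} the centering is exact (no $1/\sqrt{d}$ term), which is precisely why Theorem \ref{modelSelectionMain} has no residual bias while Theorem \ref{modelSelectionMain2} does.
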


\begin{thm}
	\label{modelSelectionMain2}
	Under Assumptions \ref{assumptionInd} and \ref{assumptionIndExpertExpert}, if $\Lambda$ is a VC-Class,
	
	\[ P(\sup_{\lambda \in \Lambda}|\hat{S}(\lambda)-(1-2\bar{\epsilon})R(\lambda)-\bar{\epsilon}| > \frac{1}{4\sqrt{d}}+\delta) \leq \left(\frac{D\sqrt{n'}\delta}{\sqrt{2VC(\Lambda)}}\right)^{2 VC(\Lambda)} e^{-2n'\delta^{2}}. \]
\end{thm}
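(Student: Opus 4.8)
The plan is to split the target deviation into a \emph{concentration} part and a \emph{bias} part, dispatch the concentration exactly as in Theorem~\ref{modelSelectionMain}, and spend the new effort on the bias, which is where Assumption~\ref{assumptionIndExpertExpert} enters. Throughout I would condition on the training data, so that $z^{\lambda}$ is a fixed classifier and the test sample $(X^{test}_i,Z_i,(Y^{test}_{i,j})_j)$ is an i.i.d.\ sequence independent of it. The key observation is that the uniform concentration $P(\sup_{\lambda}|\hat{S}(\lambda)-E[\hat{S}(\lambda)]|>\delta)\le (D\sqrt{n'}\delta/\sqrt{2VC(\Lambda)})^{2VC(\Lambda)}e^{-2n'\delta^{2}}$ uses only Assumption~\ref{assumptionInd} and the fact that $\Lambda$ is a VC-class; Assumption~\ref{assumptionIndModelExpert} was used in Theorem~\ref{modelSelectionMain} only to \emph{identify} $E[\hat{S}(\lambda)]$ with $(1-2\bar{\epsilon})R(\lambda)+\bar{\epsilon}$. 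Hence it suffices to show that, under Assumption~\ref{assumptionIndExpertExpert}, the bias $\sup_\lambda|E[\hat{S}(\lambda)]-(1-2\bar{\epsilon})R(\lambda)-\bar{\epsilon}|$ is of order $1/\sqrt{d}$, and then to combine the two by the triangle inequality.

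To compute the mean I would set $A_i=I(z^{\lambda}_i\neq Z_i)$ and $W_i=\frac1d\sum_{j=1}^d I(Z_i\neq Y^{test}_{i,j})$ and use the ternary identity $I(a\neq c)=I(a\neq b)+I(b\neq c)-2I(a\neq b)I(b\neq c)$ for $a,b,c\in\{0,1\}$, applied with $(a,b,c)=(z^{\lambda}_i,Z_i,Y^{test}_{i,j})$. Averaging over $j$ gives the pointwise identity
\[ \frac1d\sum_{j=1}^d I(z^{\lambda}_i\neq Y^{test}_{i,j}) = A_i(1-2W_i)+W_i, \]
so that $\hat{S}(\lambda)=\frac{1}{n'}\sum_i\big(A_i(1-2W_i)+W_i\big)$. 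Taking expectations over the test sample, $E[W_i]=\bar{\epsilon}$ and $E[A_i(1-2W_i)]=(1-2\bar{\epsilon})R(\lambda)-2\,\mathrm{Cov}(A_i,W_i)$, whence
\[ E[\hat{S}(\lambda)]-(1-2\bar{\epsilon})R(\lambda)-\bar{\epsilon} = -2\,\mathrm{Cov}\big(I(z^{\lambda}\neq Z),\,W\big). \]
Thus the entire bias is governed by the covariance between the classifier's error and the \emph{average} expert error, which is exactly the quantity Assumption~\ref{assumptionIndModelExpert} forced to vanish in Theorem~\ref{modelSelectionMain} and which we must now merely bound.

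To bound it I would apply Cauchy--Schwarz: $|\mathrm{Cov}(I(z^{\lambda}\neq Z),W)|\le \sqrt{\mathrm{Var}(I(z^{\lambda}\neq Z))\,\mathrm{Var}(W)}$. Here $\mathrm{Var}(I(z^{\lambda}\neq Z))=R(\lambda)(1-R(\lambda))\le \tfrac14$, and this is the only place Assumption~\ref{assumptionIndExpertExpert} is needed: since the expert errors $I(Z\neq Y^{test}_{\cdot,j})$ are pairwise uncorrelated, the cross terms drop out and $\mathrm{Var}(W)=\frac{1}{d^2}\sum_{j=1}^d \epsilon_j(1-\epsilon_j)\le \frac{1}{4d}$. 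Combining the two factors yields a bias of order $d^{-1/2}$, matching the displayed $1/(4\sqrt{d})$ up to the bookkeeping of the numerical constant. Finally, on the event that the concentration term is at most $\delta$, the triangle inequality gives $\sup_\lambda|\hat{S}(\lambda)-(1-2\bar{\epsilon})R(\lambda)-\bar{\epsilon}|\le \delta+1/(4\sqrt{d})$; taking complements and invoking the concentration bound above delivers the stated inequality.

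The genuinely new obstacle is the second step: unlike in Theorem~\ref{modelSelectionMain}, the mean of $\hat{S}$ no longer lands exactly on $(1-2\bar{\epsilon})R(\lambda)+\bar{\epsilon}$, so one must argue the leftover covariance is small rather than zero. The leverage is that Assumption~\ref{assumptionIndExpertExpert} kills the correlations \emph{between} experts, collapsing $\mathrm{Var}(W)$ from $O(1)$ to $O(1/d)$: averaging many weakly-dependent imperfect experts is what drives the bias toward zero as $d\to\infty$. The VC-concentration is the technically heaviest ingredient, but it is inherited verbatim from Theorem~\ref{modelSelectionMain}, so the remaining care lies only in tracking the constant and in the conditioning argument that legitimizes treating $z^{\lambda}$ as fixed.
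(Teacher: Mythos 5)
Your proof follows essentially the same route as the paper: the ternary identity is the paper's Lemma~\ref{scoreTransformation}, the Cauchy--Schwarz bound on the covariance (with pairwise uncorrelatedness from Assumption~\ref{assumptionIndExpertExpert} collapsing $\mathrm{Var}(W)$ to $O(1/d)$) is exactly Lemma~\ref{inequality1}, and the uniform VC concentration plus triangle inequality is Lemma~\ref{uniformConvergence} combined as in the appendix. Your honest bookkeeping gives a bias bound of $1/(2\sqrt{d})$ rather than the stated $1/(4\sqrt{d})$, but the paper's own Lemma~\ref{inequality1} likewise only yields $\sigma_{z^{\lambda}}/\sqrt{d}\le 1/(2\sqrt{d})$, so that factor-of-two discrepancy lies in the theorem statement itself, not in your argument.
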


The proofs of these facts are sketched in Appendix \ref{appendixProof}. 
Thus, Theorem \ref{modelSelectionMain} states that, under assumptions \ref{assumptionInd} and \ref{assumptionIndModelExpert}, as $n$ increases, with 
high probability, $\hat{S}(\lambda)$ will not deviate more than roughly $\frac{1}{\sqrt{n'}}$ 
from $(1-2\bar{\epsilon})R(\lambda)+\bar{\epsilon}$ (uniformly). Moreover, under assumption \ref{assumptionError}, $(1-2\bar{\epsilon})R(\lambda)+\bar{\epsilon}$ 
increases on $R(\lambda)$. Hence, the minimizer of $\hat{S}(\lambda)$ will be close to the minimizer of 
$R(\lambda)$. Using Theorem \ref{modelSelectionMain2}, the same type of reasoning 
applies under \ref{assumptionIndExpertExpert}, with the exception that $\hat{S}(\lambda)$ will not deviate more 
than roughly $\frac{1}{\sqrt{n'}}+\frac{1}{4\sqrt{d}}$ from $(1-2\bar{\epsilon})R(\lambda)+\bar{\epsilon}$. Hence, consistency is obtained 
only if the number of experts also increases. Next section describes how to introduce sparsity on a particular model from the literature. In Section \ref{applications}
we discuss how to select the tuning parameter for this model using $\widehat(S)$.

\section{Model Description and Sparse Fitting}
\label{model}

The model we use is described by the following conditions, where we use the same notation as in Table \arabic{tableNotation}:

\begin{enumerate}
	\item $(Z_{i})_{i \leq n}$ are conditionally independent given $(X_{i})_{i \leq n}$.
	\item $Z_i|X_{i}=x_{i} \sim Ber\left(\frac{\exp{\{\beta_0+\sum_{j=i}^k \beta_j x_{ij}}\}}{1+\exp{\{\beta_0+ \sum_{j=i}^k \beta_j x_{ij}}\}}\right).$
	\item $(Y_{ij})_{i \leq n, j \leq k}$ are conditionally independent given $(Z_{i})_{i \leq n}$ and $(X_{i})_{i \leq n}$.
	\item $P(Y_{ik} \neq Z_{i}|Z_{i}=z_{i},X_{i}=x_{i}) = \frac{1}{1+exp\{\alpha_k+\sum_{j=i}^k\gamma_j x_{ij}\}}$.
\end{enumerate}

This model is similar to the one specified in \cite{Yan2010} with the exception that the $\gamma_j$ coefficients do not depend on the expert. The model's parameters can be interpreted. The higher an $\alpha_k$ is, the more precise the $k$-the expert. Also, the $\gamma_j$ coefficients explain how each feature influences the difficulty in classifying a sample unit. $\beta'$s
are the coefficients that measure the influence of the covariates on the real response. One implicit assumption is that the influence of each feature is the same for all experts. 
The number of parameters in this model is more than twice the number of features. Thus, sparse classifiers might improve the prediction error if $n$ is small \cite{Hast:Tibs:Frie:2001}.

We choose this model because it is simple enough and yet sufficiently reasonable to be applied to many practical situations. We do not intend
to argue that this is the best model in all situations. However, similar ideas of how to introduce sparsity can be used in other models from the literature.

The joint distribution of $(Y,Z)$ given $X$ corresponds to a mixture of products of independent Bernoulli variables. In fact, denoting

$$\mu_i:=\frac{\exp{\{\beta_0+\sum_{j=i}^J \beta_j x_{ij}}\}}{1+\exp{\{\beta_0+ \sum_{j=i}^J \beta_j x_{ij}}\}},$$

the complete likelihood (conditional on the features) is given by
\begin{align*}
 &L(y,z;\theta,x) = \\[2mm]
 &\prod_i P(\forall k, Y_{ik}=y_{ik},Z_i=z_i|X_{i}=x_{i}) =  \prod_i  P(\forall k,Y_{ik}=y_{ik}|Z_i=z_i,X_{i}=x_{i})P(Z_i=z_i|X_{i}=x_{i}) =  \\
 &\prod_i \left( \prod_{k } P(Y_{ik}=y_{ik}|Z_i=z_i,X_{i}=x_{i}) \right) P(Z_i=z_i|X_{i}=x_{i})= \prod_i \mu_i^{z_i} (1-\mu_i)^{1-z_i} \times b_i,
\end{align*}

 where 
 
 $$b_i=  \prod_{k } \left[\left(\frac{\exp\{\alpha_k+\sum_j\gamma_j x_{ij}\}}{1+\exp\{\alpha_k+\sum_j\gamma_j x_{ij}\}}\right)^{y_{ik}} \left(\frac{1}{1+\exp\{\alpha_k+\sum_j\gamma_j x_{ij}\}}\right)^{1-y_{ik}} \right]^{z_i}\times$$
 
 $$\times
\left[\left(\frac{1}{1+\exp\{\alpha_k+\sum_j\gamma_j x_{ij}\}}\right)^{y_{ik}} \left(\frac{\exp\{\alpha_k+\sum_j\gamma_j x_{ij}\}}{1+\exp\{\alpha_k+\sum_j\gamma_j x_{ij}\}}\right)^{1-y_{ik}} \right]^{1-z_i}  $$

(that is, $b_i$ is the joint probability of the experts responses conditional on the true labels and on the explanatory variables) and $\theta$ indicates all of the model's parameters. Hence, the (complete) log-likelihood is given by
\begin{align}
\label{logLikeli}
&l(y,z;\theta,x)= \\[2mm]
&\sum_i \left( \sum_{k } d_{ik}\log\left(\frac{\exp\{\alpha_k+\sum_j\gamma_j x_{ij}\}}{1+\exp\{\alpha_k+\sum_j\gamma_j x_{ij}\}}\right)+
(1-d_{ik})\log\left(\frac{1}{1+\exp\{\alpha_k+\sum_j\gamma_j x_{ij}\}}\right)\right) + \notag\\
&+z_i \log(\mu_i)+(1-z_i) \log(1-\mu_i),\notag 
\end{align}

where
$$ d_{ik}:=1+2z_iy_{ik}-z_i-y_{ik}.$$

Traditionally, the (local) maximum of the marginal likelihood (defined as $L(y;\theta,x)=\sum_z L(y,z;\theta,x)$) is found by using the EM algorithm \cite{EMOriginal}.
We propose to introduce sparsity to this solution. Sparsity reduces the 
number of parameters we have to estimate, and hence can improve the prediction error. For a comprehensive account of the role of 
sparsity on prediction problems, the reader
is referred to \citep{Hast:Tibs:Frie:2001}. To find a sparse fit of the model, instead of maximizing the marginal likelihood, we introduce
a $L_1$-penalty in the function. That is, we compute,

\begin{equation}
\label{lasso}
	\arg \sup_{\theta} \mbox{ } \left( l(y;\theta) - \lambda\sum_{j=1}^k|\beta_j|- \lambda\sum_{j=1}^k|\gamma_j| \right),
\end{equation}

for some fixed $\lambda>0$. $l(y;\theta)$ is the log-likelihood of the observed noisy labels, $y$. Section 
\ref{modelSelection} indicates how one can pick an optimum value of $\lambda>0$. Other penalties (e.g., $L_2$)
could also lead to better prediction errors, however $L_1$ penalty creates sparse solutions (that is, it not only shrinks the coefficients)
and, as we will see, is tractable from a computational point of view. In order to solve Equation \ref{lasso}, we will first 
rephrase it in terms of a Bayesian problem that leads to the same results. Imagine that we assign a prior probability for $\theta$ as follows: 

\begin{equation}
	\label{lassoPrior}
	\pi(\theta) \propto \exp\left(-\lambda\sum_{j=1}^k|\beta_j|-\lambda\sum_{j=1}^k|\gamma_j|\right).
\end{equation}

The maximum a posteriori estimate (MAP) for $\theta$, given $Y$ and $X$, corresponds to the solution of Equation \ref{lasso}. Let

$$ g(\theta,z) := l(y,z;\theta,x) - \lambda\sum_{j=1}^k|\beta_j|- \lambda\sum_{j=1}^k|\gamma_j|, \mbox{ }$$

where $l(y,z;\theta,x)$ is as in Equation \ref{logLikeli}.
To 
find the MAP estimate we use a MAP-EM algorithm \cite{EM}. 
That is, we first initialize $\theta$ with some given values. Then,  we iterate until convergence:

\begin{enumerate}
	\item \textbf{(Expectation step)} Find the expected value of the $g(\theta,Z)$, conditional on the current estimates of the parameters $\theta$ and on $y_{ij}$ (denoted by $E[g(\theta,Z)]$).
	\item \textbf{(Maximization step)} Maximize $E[g(\theta,Z)]$ with respect to $\theta$.
\end{enumerate}

Since $g(\theta,Z)$ is linear in $Z$, the Expectation step follows directly from calculating
\[E[Z_i|Y_{ik}=y_{ik}\mbox{ }\forall i,k]=	\frac{\mu_i \cdot \exp\{\sum_{k} y_{ik}*(\alpha_k+\sum_j\gamma_j x_{ij})\}}{\mu_i \cdot \exp\{\sum_{k} y_{ik}*(\alpha_k+\sum_j\gamma_j x_{ij})\}+(1-\mu_i)\exp\{\sum_{k} (1-y_{ik})*(\alpha_k+\sum_j\gamma_j x_{ij})\}}. \]

and plugging these values into $g(\theta,Z)$. Denote by $g(\theta,z)$ the expected value of $g(\theta,Z)$.
For the Maximization step, observe that 
\begin{subequations}
\label{eqMax}
\begin{align}
 &\arg \sup_{\theta} \mbox{ } g(\theta,z)= \label{eqBla2}\\[1.5mm]
 &\arg \sup_{\gamma's,\alpha's} \mbox{ } \sum_i \left( \sum_{k \in A_i} d_{ik}\log\left(\frac{\exp\{\alpha_k+\sum_j\gamma_j x_{ij}\}}{1+\exp\{\alpha_k+\sum_j\gamma_j x_{ij}\}}\right)+
(1-d_{ik})\log\left(\frac{1}{1+\exp\{\alpha_k+\sum_j\gamma_j x_{ij}\}}\right)\right)  \label{eq1} \\
&- \lambda\sum_{j=1}^k|\gamma_j| \label{eqbla} \\[1.5mm] 
&+ \arg \sup_\beta \mbox{ } \sum_i \left(z_i \log(\mu_i)+(1-z_i) \log(1-\mu_i) \right) - \lambda\sum_{j=1}^k|\beta_j|.  \label{eq2}
\end{align}
\end{subequations}

Hence, we have two independent maximization problems, \ref{eq1} and \ref{eq2}. Each of them correspond to solving for Weighted L1-Regularized Logistic Regressions, which is implemented in functions such as glmnet \cite{glmnet} in R.
More details on this are given in Appendix \ref{appendixMax}.

The MAP-EM often converges to different points according to the initialization values. One reason for this is that them MAP-EM is
guaranteed to converge only to local maximums. A more important reason is due to a type of non-identifiability \cite{Reilink} in the model. The parameters $(\alpha,\gamma,\beta)$ and $(-\alpha,-\gamma,-\beta)$ induce the same distribution for the data\footnote{For example, consider there is only one expert and that $Z$ represents if a patient is sick or not. We get the same probability that the expert finds the patient to be sick when the expert has good accuracy and the patient has a high probability of being sick (parameters $(\alpha,\gamma,\beta)$) and when the expert has a bad accuracy and the patient has a small probability of being sick (parameters $(-\alpha,-\gamma,-\beta)$).}. This is common in mixture models and is known as trivial non-identifiability \cite{practicalIdent}. Consequently, the likelihood will have two optimizers. In order to choose between these points we assume that, averaging over all experts, the probability of correct classification is larger 
than $50\%$. This assumption was discussed in Section \ref{modelSelection} and can also be found in \cite{crowd}. Using this assumption, if the MAP-EM converges to $\theta$, we choose between $\theta$ and $-\theta$, selecting the classifier which agrees the most with majority vote.

Next section shows empirical performance of this method and the model selection technique in both simulated and real datasets. In particular we discuss how 
to use the model selection technique from Section \ref{modelSelection} to choose the tuning parameter $\lambda$.

\section{Experiments}
\label{applications}

We perform 4 experiments that aim at exploring the two methods proposed (sparsity and model selection). 
Experiment in \ref{simulatedData}, is completely simulated: we generate the features, real responses and also responses from experts. This allows the Bayes
error to be calculated. Experiments \ref{ionData} and \ref{wineData} use data from the UCI repository \cite{Newman+Hettich+Blake+Merz:1998}. These databases only contain features and appropriate labels and, thus, we complement them with simulated responses from hypothetical experts. In \ref{simulatedData}, \ref{ionData} and \ref{wineData} we generate the votes from the experts in three ways: 

\begin{enumerate}
\item The probabilities of misclassification do not depend on the observed features,
\item The probabilities of misclassification follow the model described in Section \ref{model}
\item The probabilities of misclassification do not follow Section \ref{model}.
\end{enumerate}

The exact description of how the votes were generated varies and is described in each example.

Experiment \ref{astroData} presents a real data set in which a large set of experts responses (42) is available. Hence, majority vote gives us the real response with high probability. For instance, assuming each expert is correct with probability 70\% and the responses from experts are independent, majority vote would get the right label with probability 
$\approx$ 99.5\%. In this example, (i), (ii) and (iii) correspond to taking random subsets of size 3 from the 42 experts and comparing the results we get with the (reliable) majority vote on the 42 experts, as if these were the true labels.

In each experiment, we fit and compare the EM without sparsity (denoted by \emph{EM}), with sparsity (\emph{EM-Sparse}) and a L1-penalized logistic regression on the labels obtained by majority vote (\emph{Majority}). For each of the classifiers obtained, we compute $\hat{S}$ and compare it to $\hat{R}$ (which in practice 
would not be available), the empirical risk. 
For the sake of comparison, we also fit a L1-penalized logistic regression on the real labels. 

We initialize all the parameters generating Gaussian variables with variance $1$. For the $\alpha$'s and $\gamma$'s we pick mean $0$. For the $\beta$'s, the mean is the corresponding coefficient of the logistic regression fitted through majority vote. 
In order to avoid local maximums, this procedure was repeated 30 times for each simulation.

\subsection{Simulated Data Set}
\label{simulatedData}

We take sample size $2500$. The logit of the probability of each appropriate label being $1$ is $\beta_0+\sum_{j=i}^5 \beta_j x_{ij}$  with $\beta=(-0.1,1,0.25,0.24,-0.3,-0.2)$. $(X_1,X_2,X_3,X_4,X_5)$ follows a multivariate normal with mean mean $(1, 2, 3, 4, 5)$ and covariance matrix,

 \[ \left( \begin{array}{ccccc}
0.50 &0.10 &0.25 &0.10 &0.10 \\
0.10 & 0.50 &0.10 &0.05 &0.04  \\
0.25 & 0.10 & 0.80 &0.01 &0.10 \\
0.10 & 0.05 & 0.01& 0.40 & 0.10 \\
0.10 & 0.04 & 0.10 &0.10  & 0.50  \end{array} \right)\] 

We also include 50 covariates unrelated to the labels generated independently from a standard normal distribution. We generate the experts' responses in the following ways:

\begin{enumerate}[label=(\roman{*}), ref=(\roman{*})]
	\item Three experts with misclassification probabilities 0.5, 0.15 and 0.47.
	\item Four experts, with misclassification probabilities as in Section \ref{model}, with $\alpha=(0,.75,-.1)$ and 
	
	$\gamma=(.1,.2,-.08,.025,-.065)$.
	\item Three experts, with probabilities as in Section \ref{model}, with $\alpha=(0,.65,-.12)$ and $\gamma=(.05,.05,-.1,-.1,0)$ but generating the votes through the square of the covariates. 
\end{enumerate}

Figure \arabic{simulated} shows the results of applying the model selection ideas to tune the parameter $\lambda$. 
It also shows the estimated predictive risk (based on the real labels) for
\emph{EM}, \emph{EM Sparse} and \emph{Majority}, with an interval with one standard 
deviation around the mean.  The Bayes risk is represented by a horizontal line. It is possible to see that in (i), (ii) and (iii), \emph{EM Sparse} 
beats the other models. Moreover, plain \emph{EM} does not give satisfactory results. This 
is because there are many (noninformative) covariates, and hence introducing sparsity becomes crucial. Figures related to (ii) and (iii) also 
show that $\widehat{S}$ is also a useful tool to detect points in which either the EM did not 
converge: they are the points that have a very different behavior in these curves.

Finally, the results from Table \arabic{SModels} agree with our analysis: using $\widehat{S}$ to select among different methods gives the same results as 
using $\widehat{R}$, that is, when using $\widehat{S}$ we also conclude that \emph{EM Sparse} is the best model in this case.

\subsection{Ionosphere Data Set}
\label{ionData}

The data set ion holds 351 radar returns which can be ``good'' or ``bad''. There are 34 continuous features. We simulate the expert labels, using at most the $4$ first features, in the following ways:

\begin{enumerate}[label=(\roman{*}), ref=(\roman{*})]
	\item Five experts with misclassification probabilities 0.6, 0.2, 0.5, 0.4 and 0.4.
	\item Four experts, with misclassification probabilities as in Section \ref{model}, with $\alpha=(0.7,0,1.6,0.7)$ and 
	
	$\gamma=(0.3,0.25,-0.3,0.1)$. 
	\item Four experts, with probabilities as in Section \ref{model}, $\alpha=(0.7,0,1.6,0.7)$ and $\gamma=(0.3,0.25,-0.3,0.1)$, but generating the votes through the square of the covariates.  
\end{enumerate}

We use a training set of size $175$. Figure \arabic{ion} shows how we fitted \emph{EM-Sparse} and compares it to the other models. $\hat{S}$ is approximately monotonically increasing with $\hat{R}$ and, thus, the minimizer of $\hat{S}$ has empirical risk close to that of the empirical risk minimizer. In scenario (ii), although $\lambda^{*}$ is far from the one which 
minimizes the empirical risk, their risk is similar. Abrupt variations in the top graphs also indicate cases in which the EM probably did not converge. On the bottom, \emph{EM-sparse} improves on results
of both \emph{EM} and \emph{Majority} in all scenarios.
Finally, we see from the results of Table \arabic{SModels} that using $\widehat{S}$ to select between the different models indicates that \emph{EM Sparse} is the model with smaller estimated predictive risk $\hat{R}$ on these cases.

\subsection{Wine Quality Data Set}
\label{wineData}

The data set wine contains $1599$ red wines and $11$ features such as alcohol content and pH. The wine quality of a sample unit is a number between $0$ and $10$. We define the appropriate label as $1$ if wine quality is greater than $5$ and $0$, otherwise. We generate the noisy labels, using at most the $5$ first features, in the following way:

\begin{enumerate}[label=(\roman{*}), ref=(\roman{*})]
	\item Three experts with misclassification probabilities 0.4, 0.3 and 0.5.
	\item Four experts, with misclassification probabilities as in Section \ref{model}, with $\alpha=(1,-0.5,2.1,2.3)$ and 
	
	$\gamma=(0.25,0.4,0.3,0)$.
	\item Three experts, with probabilities as in Section \ref{model}, $\alpha=(1,-0.5,1.2)$ and $\gamma=(0.1,0.2,-0.2,-0.3,-0.3)$, but generating the votes through the square of the covariates.
\end{enumerate}

We use a training set of size $1000$. Figure \arabic{wine} shows how we fitted \emph{EM-Sparse} and compares it to the other models. Regarding the bottom of the figure, in (i) sparsity reduces the prediction error: both \emph{EM-Sparse} and \emph{Majority} are as good as the model fitted using the real labels and much better than \emph{EM}. In (ii), \emph{Majority} is worse than the other approaches, which have the same performance. In (iii), all models perform close to the one obtained using the real labels. On the top of (iii), $\lambda^{*}$ is far from the one which minimizes $\hat{R}$, but has approximately the same risk.
Notice that Table \arabic{SModels} leads us to similar conclusions, hence using model selection ideas introduced here also helps us to decide on what is the best approach, EM or majority vote.

\subsection{Astronomy Data Set}
\label{astroData}

The sample units in this data set are galaxies. The label is $1$ if the shape of the galaxy is \emph{regular} \cite{izbicki} and $0$, otherwise. Each galaxy has been labeled by $42$ astronomers from CANDELS team \cite{citeulike:9923967}. For each galaxy, there are 7 features which are summary statistics of the their images. These statistics are further described in \cite{izbicki} and \cite{lotz}. The training set is composed of $90$ galaxies and the testing set of $85$. We perform three experiments, (i), (ii) and (iii), by picking as the noisy labels random subsets of size $3$ out of the $42$ astronomers. 
True labels are defined to be the majority vote over the 42 astronomers.

Figure \arabic{astroLambda} illustrates the procedure of fitting \emph{EM-Sparse} and compares it to \emph{EM} and \emph{Majority}. On the top, minimizing $\hat{S}$ yields the same result as minimizing $\hat{R}$. On the bottom, \emph{EM-Sparse} and \emph{Majority} have approximately the same performance, close to the performance of the model that was fitted when using the real labels. 
On the other hand, using \emph{EM} without introducing sparsity leads to slightly worse prediction errors in (iii).
We emphasize that the large confidence intervals are due to a small sample size. Hence, it is difficult to get conclusive results of which model is
the best in this case. However, the first row of Figure \arabic{astroLambda} shows in practice that assumptions made in Section \ref{modelSelection} for model selection are reasonable for this problem.

 \begin{figure}[htpb!]
   \centering
   \includegraphics[scale=1]{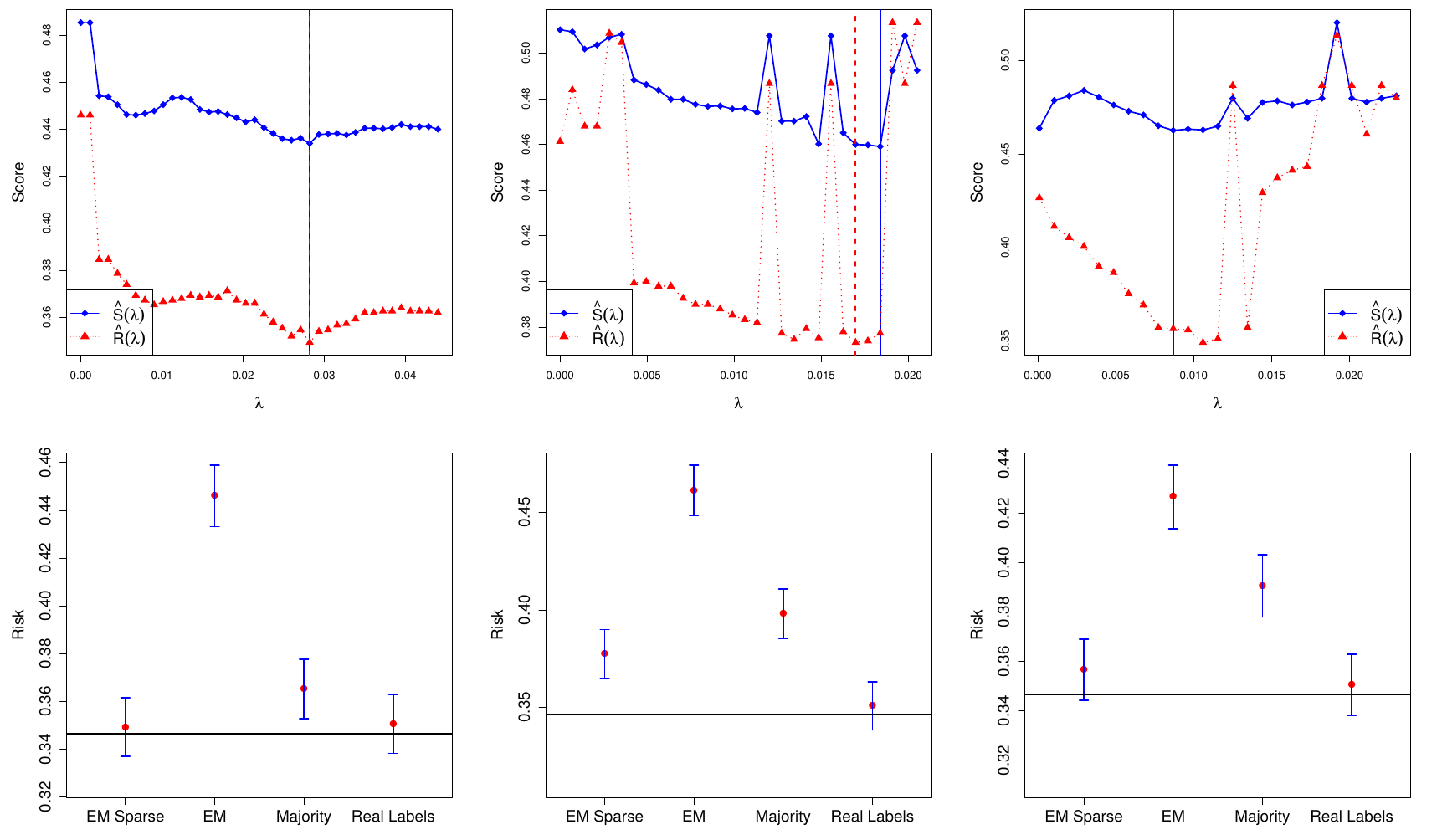}
   \caption{Top: Process of choosing $\lambda$ using $\widehat{S}(\lambda)$ for Simulated Data Set for models (i), (ii) and (iii) respectively. Vertical lines indicate where the minimum is attained for $\widehat{S}(\lambda)$ (solid) and for $\widehat{R}(\lambda)$ (dashed). Bottom: Estimated prediction errors for each dataset according to each model. Horizontal lines indicate error of the Bayes classifier.}
  \label{simulated}
\end{figure}

\begin{figure}[htpb!]
  \centering
  \includegraphics[scale=1]{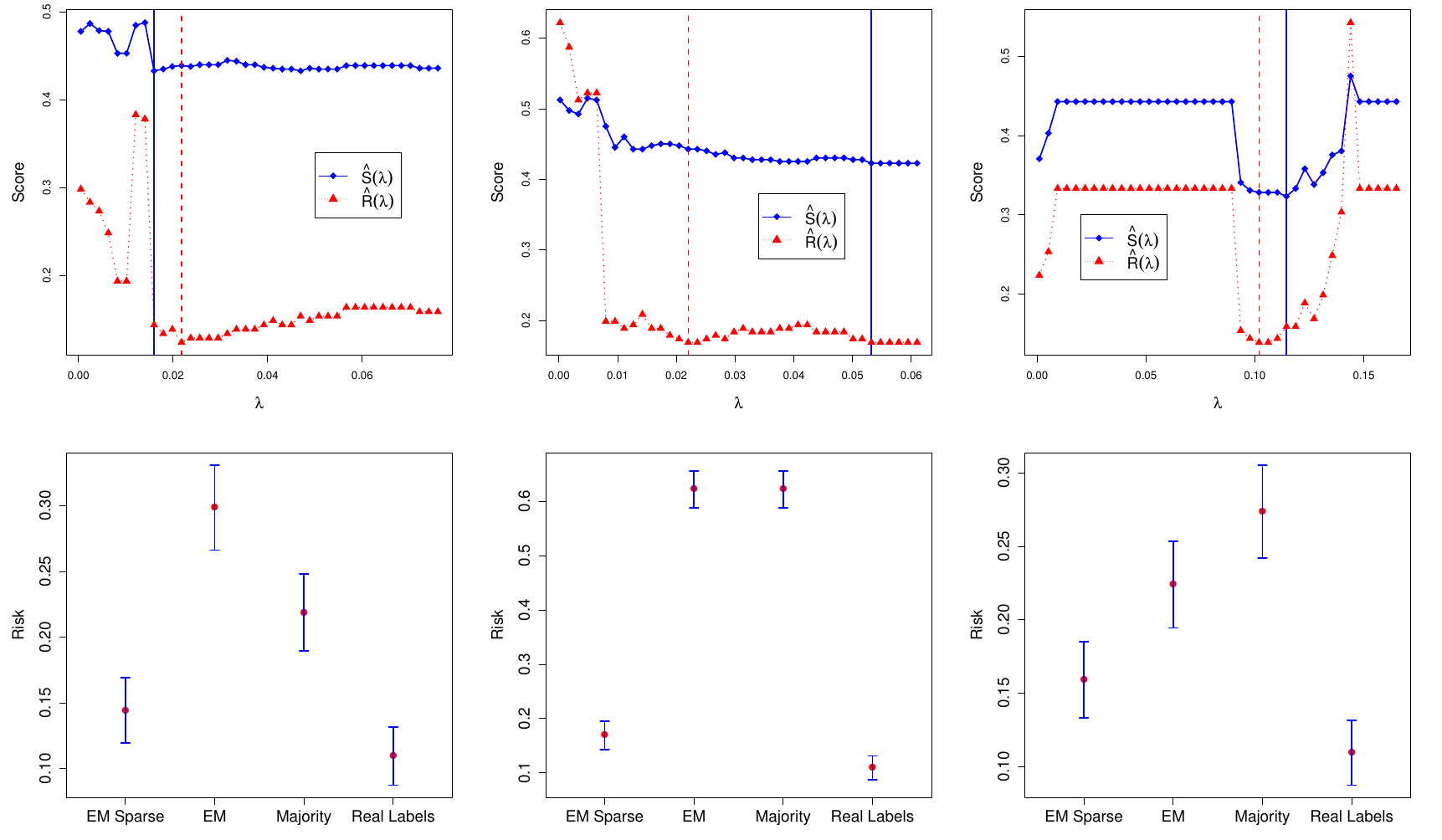}
  \caption{Top: Process of choosing $\lambda$ using $\widehat{S}(\lambda)$ for Ionosphere Data Set for models (i), (ii) and (iii) respectively. Vertical lines indicate
  where the minimum is attained for $\widehat{S}(\lambda)$ (solid) and for $\widehat{R}(\lambda)$ (dashed). Bottom: Estimated prediction errors for each dataset according to each model. }
  \label{ion}
\end{figure}

\begin{figure}[htpb!]
	\centering
      \includegraphics[scale=1]{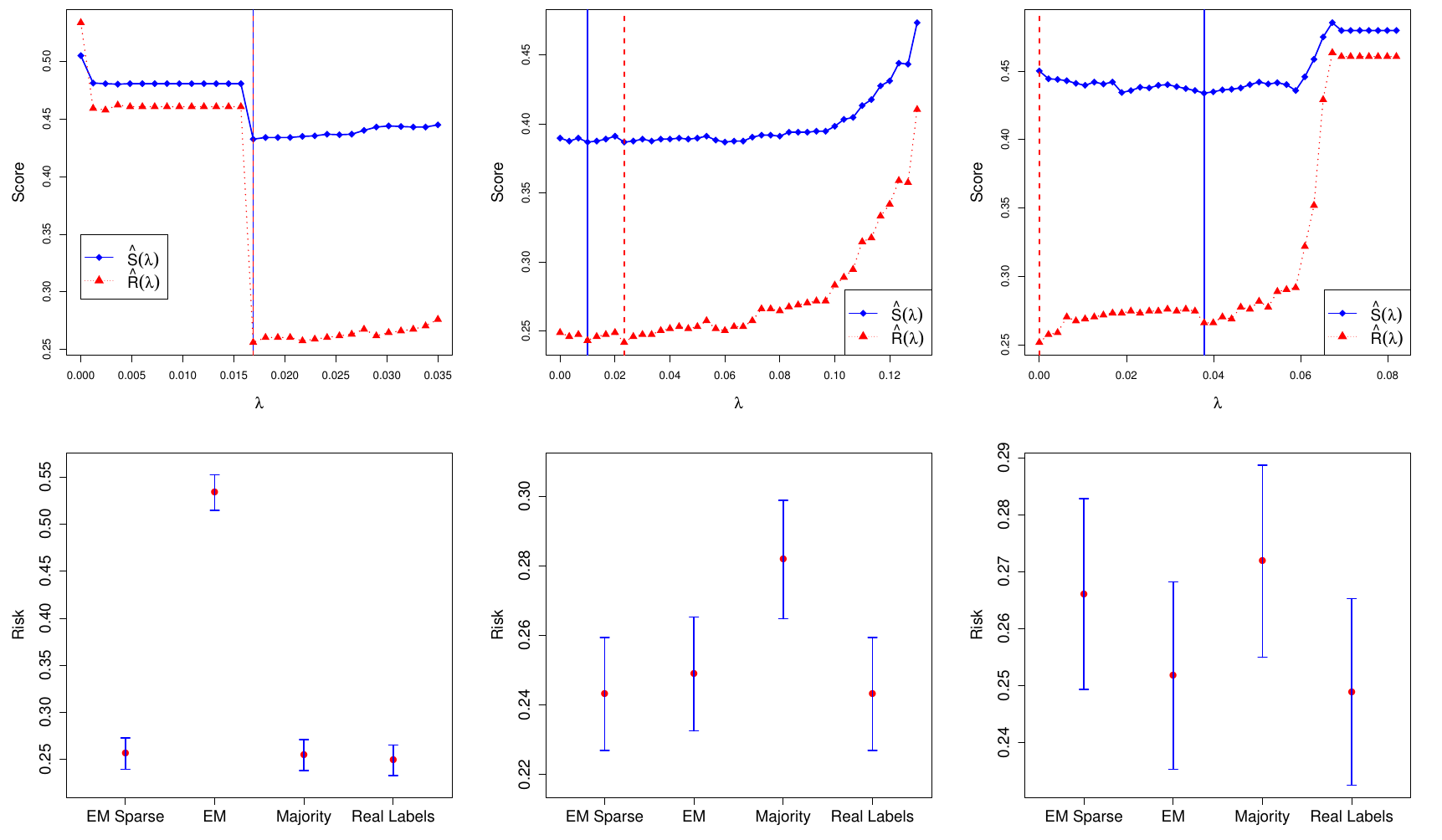}	
	\caption{Top: Process of choosing $\lambda$ using $\widehat{S}(\lambda)$ for Wine Data Set for models (i), (ii) and (iii) respectively. Vertical lines indicate
  where the minimum is attained for $\widehat{S}(\lambda)$ (solid) and for $\widehat{R}(\lambda)$ (dashed). Bottom: Estimated prediction errors for each dataset according to each model. }
	\label{wine}
\end{figure}

\begin{figure}[htpb!]
  \centering
  \includegraphics[scale=1]{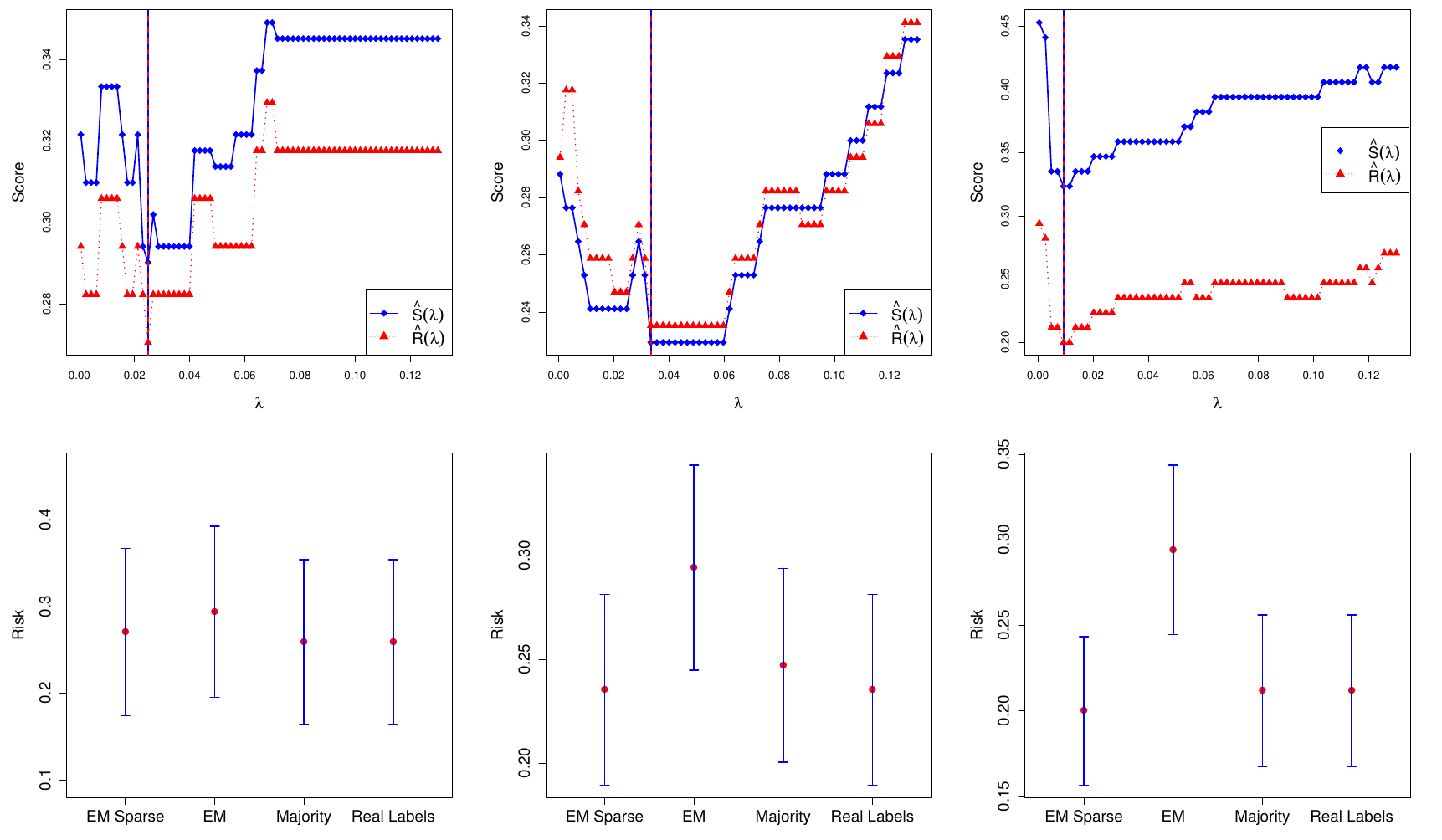}	
  \caption{Top: Process of choosing $\lambda$ using $\widehat{S}(\lambda)$ for Astronomy Data Set. Vertical lines indicate
  where the minimum is attained for $\widehat{S}(\lambda)$ (solid) and for $\widehat{R}(\lambda)$ (dashed). Bottom: Estimated prediction errors for each dataset according to each model. }
  \label{astroLambda}
\end{figure}

 \begin{table}[!h]
 	\begin{center}
 	\caption{Values of statistic $\widehat{S}$ for the experiments in \ref{applications}. Bold numbers stand for the minimizer of $\widehat{S}$, * indicates the minimizer of $\widehat{R}$.}			
 	\begin{tabular}{ccccc} \hline \label{SModels}
 		& (i)&\textbf{0.434}* &0.485 & 0.446 \\ 
 		Simulated & (ii)&\textbf{0.459}* &0.510 &0.483  \\ 
		& (iii)&\textbf{0.462}* &0.463&0.477 \\\hline 
 		& (i)&\textbf{0.432}*&0.477 &0.435  \\ 
		Ionosphere & (ii)& \textbf{0.422}*&0.512 &0.544  \\ 
		& (iii)& \textbf{0.323}*&0.370 &0.398  \\\hline 
 		& (i)& 0.432&0.505 &\textbf{0.431}*  \\ 
		Wine & (ii)&\textbf{0.386}* &0.389 &0.399  \\ 
 		& (iii)& \textbf{0.433}&0.450*&0.451  \\\hline
		& (i)&0.290 &0.321 &\textbf{0.286}*  \\ 
		Astronomy & (ii)& \textbf{0.229}*&0.288 &0.241  \\ 
		& (iii)& \textbf{0.323}*&0.452 &0.335  \\\hline 
 	\end{tabular} 
 	\end{center}
 \end{table}

\section{Conclusions and Future Work}
\label{conclusions}

Dealing with noisy labels is a common problem. We show a way one can build classifiers that potentially have better performance than more traditional methods used when true labels are unavailable. The idea behind it is that sparsity is a good way to avoid overparametrizations and therefore creates classification schemes that may have better prediction errors. We also show how model selection can be performed, in particular how one can choose tuning parameters that induce sparsity. The method is based on the introduction of a surrogate function for the estimated risk. Both theoretical and empirical results indicate that the proposed method for model selection works under a fairly large class of problems. 

Even though in many situations latent variable models provide big improvements compared to majority vote 
(see \cite{Yan2010}and \cite{Raykar2010} for examples of such cases), we saw that in some cases the latter can perform better than the former. Two important reasons of why this happens are 1. The Expectation Maximization Algorithm is sensible to initialization and may converge to local minimums and 2. If the number of experts is large, majority vote can be accurate provided the voters are reasonably good. On the other hand, in such situations latent variable models have too many parameters to be estimated, and hence estimation is more difficult. This is a problem specially if the number of samples is small. However, sparsity can often diminish this problem, leading to estimators that may be better than the ones derived from majority vote procedures. A way to deal with this in practice is to use the proposed model selection technique to compare models built on majority vote labels and on models derived from latent variable models. Performing this procedure in our examples almost always led us 
to pick the model among \emph{EM Sparse, EM} and \emph{Majority} which had the smallest $\hat{R},$ which is the standard procedure when true labels are available.

Even though we focus on the approach of building models without first estimating the true labels of the data, the ideas of model selection presented are quite general. In fact, even when using the two step procedures (that first find the ``true'' labels 
either using majority vote or using fancier methods such as in \cite{crowd}, and then build classifiers based on the recovered labels), the technique proposed for  choosing between models is still valid. 
An advantage of the latent variables approach over two step procedures is that the first naturally allows partial information from experts to be incorporated when classifying new instances,
that is, one can easily calculate $p(z|x,y)$ for new data, even if not all experts observe the data point. 

On this paper, we used the same tuning parameter for both $\pmb{\gamma}'$s and $\pmb{\beta}'$s. As the roles of parameters are of different nature, in practice better performance can be achieved by using two different tuning parameters. This improvement comes at the expense of computational time.

Even though we only introduced sparsity for a specific model, the same arguments can be performed in different situations. For example, one could easily
create models in which $P(Y=1|Z=1) \neq P(Y=0|Z=0)$ by introducing new coefficients. It is also possible to use links different than the logit,
and also include dependencies that are not linear in the covariates that were observed. On can also introduce sparsity on approaches from the literature that
were already shown to be useful (e.g., \cite{Raykar2010,AAAIW125257}). 
Our model selection technique  helps choosing between these models.

There are also open questions regarding model selection through $\hat{S}$. Theorem \ref{modelSelectionMain} does not hold if any of the 
assumptions is removed. Necessary conditions for the consistency of minimizing $\hat{S}$ in 
model selection are unknown. It would also be useful to estimate $R(\lambda^{*})$. 
Theorem \ref{modelSelectionMain} shows that $S(\lambda^{*})$ is close to $(1-2\bar{\epsilon})R(\lambda^{*})+\bar{\epsilon}$. Hence, it might 
be possible to estimate $R(\lambda^{*})$ using $S(\lambda^{*})$ and an estimator for $\bar{\epsilon}$. Therefore it would be useful
to have consistent estimators of $\bar{\epsilon}$. Finally, 
we use $\hat{S}$ to find a consistent estimator under the $0$-$1$ loss. It remains unknown how to generalize this methodology for other loss functions. For example, in some classification of binary variables, the cost of error depends on the labels.

 \section{Acknowledgments}
 The authors are thankful for Peter E. Freeman, Georg M. Goerg, Ann B. Lee, Jennifer M. Lotz, Tiago Mendon\c{c}a Jeffrey A. Newman, Mauricio Sadinle and Larry Wasserman
 for the insightful comments. The authors would also like to thank the members of the CANDELS collaboration for providing the proprietary data and
annotations for the astronomy data.
 This work was partially supported by \emph{Conselho Nacional de Desenvolvimento Cient\'ifico e Tecnol\'ogico}.

\bibliographystyle{plainnat}
\bibliography{SADM}

\newpage

\appendix
{\huge\textbf{Appendix}}

\section{Maximization Step of EM}
\label{appendixMax}
\newcounter{equationNumber}

Here we give more details of how to find the maximum in Equation \ref{eqMax}. As we noted, we have two independent maximization problems,
\ref{eq1} and \ref{eq2}.

The first problem, \ref{eq1}, can be rewritten as 

$$ \arg \sup_{\gamma's,\alpha's} \mbox{ }\sum_{i=1}^{2dn} w(i)\log(\mu'_i),$$

where

$$
w_i = \left\{ \begin{array}{l}
 \mbox{ $i$-th element of the vectorization of the matrix $(d_{jl})_{1\leq j\leq n,1\leq l\leq d}$ for $1 \leq i\leq dn$} \\
  1-w_{i-dn} \mbox{ for $dn+1 \leq i\leq 2dn$}
       \end{array} \right.
$$

and

$$
\mu'_i = \left\{ \begin{array}{l}
\mbox{ $i$-th element of the vectorization of the matrix $(\mu_{jl})_{1\leq j\leq n,1\leq l\leq d}$ for $1 \leq i\leq dn$} \\
  1-\mu'_{i-dn} \mbox{ for $dn+1 \leq i\leq 2dn$}
       \end{array} \right.
$$

Here, 

$$ \mu_{ik}= \frac{\exp\{\alpha_k+\sum_j\gamma_j x_{ij}\}}{1+\exp\{\alpha_k+\sum_j\gamma_j x_{ij}\}} $$

This is just a Weighted L1-Regularized Logistic Regression, and can be solved using functions such as glmnet \citep{glmnet} in R. Alternatively, one can directly use algorithms such as Newton-Raphson.
Note that if the number of experts is larger than the number of features, using a sparse representation of the matrix can speed up the calculations.
The observations related to this maximization problem are

$$
\overbrace{1, \ldots, 1,}^{{}\mbox{dn times}} \overbrace{0, \ldots, 0}^{{}\mbox{dn times}}. 
$$

The second problem, \ref{eq2}, can be rewritten as 

$$ \arg \sup_\beta \mbox{ } \sum_{i=1}^{2n} w_i \log(\mu'_i)- \lambda\sum_{j=1}^k|\beta_j|,$$

where 

\begin{equation*}
 \begin{array}{cc}
w_i = \left\{ \begin{array}{rl}
 z_i &\mbox{ for $1 \leq i\leq n$} \\
  1-z_{i-n} &\mbox{ for $d+1 \leq i\leq 2n$}
       \end{array} \right. & \hspace{10mm}
\mu'_i = \left\{ \begin{array}{rl}
 \mu_i &\mbox{for $1 \leq i\leq n$} \\
  1-\mu_{i-n} &\mbox{ for $n+1 \leq i\leq 2n$}
       \end{array} \right.\\
\end{array}
\end{equation*}
\vspace{4mm}

This is again a Weighted L1-Regularized Logistic Regression.
The observations related to this maximization problem are

$$
\overbrace{1, \ldots, 1,}^{{}\mbox{n times}} \overbrace{0, \ldots, 0}^{{}\mbox{n times}}. 
$$

\section{Proofs}
\label{appendixProof}
 
 Our argument of why $\hat{S}$ is a good measure of performance can be decomposed in three steps. First, we show that the mean of $\hat{S}(\lambda)$ is close to $(1-2\bar{\epsilon})R(\lambda) + \bar{\epsilon}$. Next, we prove that, if $\Lambda$ is a VC-Class, then $\hat{S}(\lambda)$ approaches its mean uniformly on $\Lambda$. Finally, since $1-2\bar{\epsilon} > 0$ (Assumption \ref{assumptionError}), the minimizer of $(1-2\bar{\epsilon})R(\lambda) + \bar{\epsilon}$ is the same as the minimizer of $R(\lambda)$. From the three steps, we conclude that minimizing $\hat{S}(\lambda)$ approaches minimizing $R(\lambda)$. 

We use the following result found to relate $\hat{S}(\lambda)$ to the empirical risk:

\begin{Lemma}  
\label{scoreTransformation}
For all $\lambda \in \Lambda$ it holds that:

\[ \hat{S}(\lambda) = \frac{1}{n'}\sum_{i=1}^{n'}{\left(\frac{\sum_{j=1}^{d}{\left(1-2\I(Z_{i} \neq Y^{test}_{i,j})\right)}}{d}\right) \I(z^\lambda_{i} \neq Z_{i})} + \frac{1}{n'}\sum_{i=1}^{n'}{\frac{1}{d}\sum_{j=1}^{d}{\I(Z_{i} \neq Y^{test}_{i,j})}}  \]
\end{Lemma}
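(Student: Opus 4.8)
The plan is to treat Lemma~\ref{scoreTransformation} as a purely deterministic algebraic identity among indicator functions: no probabilistic assumption (independence, VC, or Assumption~\ref{assumptionError}) is needed here, so it suffices to verify the claim sample by sample and then average. Starting from the definition $\hat{S}(\lambda)=\frac{1}{n'}\sum_{i=1}^{n'}\frac{1}{d}\sum_{j=1}^{d}I(z_{i}^{\lambda}\neq Y^{test}_{i,j})$, I would rewrite a single summand $I(z_{i}^{\lambda}\neq Y^{test}_{i,j})$ by inserting the intermediate (true) label $Z_i$.

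The key step is the following identity for $a,b,c\in\{0,1\}$. Writing $I(a\neq c)$ as the XOR $a\oplus c=(a\oplus b)\oplus(b\oplus c)$ and using $u\oplus v=u+v-2uv$ for $u,v\in\{0,1\}$, one gets
\[ I(a\neq c)=I(a\neq b)\bigl(1-2I(b\neq c)\bigr)+I(b\neq c). \]
I would verify this by checking the four cases of $\bigl(I(a\neq b),I(b\neq c)\bigr)$. Taking $a=z_{i}^{\lambda}$, $b=Z_{i}$, $c=Y^{test}_{i,j}$ then gives, for each fixed $i$ and $j$,
\[ I(z_{i}^{\lambda}\neq Y^{test}_{i,j})=I(z_{i}^{\lambda}\neq Z_{i})\bigl(1-2I(Z_{i}\neq Y^{test}_{i,j})\bigr)+I(Z_{i}\neq Y^{test}_{i,j}). \]

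From here only averaging remains. Summing the last display over $j=1,\dots,d$, dividing by $d$, and pulling the factor $I(z_{i}^{\lambda}\neq Z_{i})$ out of the $j$-sum (legitimate because neither the classifier label $z_{i}^{\lambda}$ nor the true label $Z_{i}$ depends on the expert index $j$) yields
\[ \frac{1}{d}\sum_{j=1}^{d}I(z_{i}^{\lambda}\neq Y^{test}_{i,j})=I(z_{i}^{\lambda}\neq Z_{i})\,\frac{\sum_{j=1}^{d}\bigl(1-2I(Z_{i}\neq Y^{test}_{i,j})\bigr)}{d}+\frac{1}{d}\sum_{j=1}^{d}I(Z_{i}\neq Y^{test}_{i,j}). \]
Averaging this over $i=1,\dots,n'$ reproduces $\hat{S}(\lambda)$ on the left and exactly the two claimed sums on the right. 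There is no real obstacle: the only content is the binary identity above, so the care required is simply to confirm that case analysis and to pull $I(z_{i}^{\lambda}\neq Z_{i})$ correctly through the expert average. The assumptions of the paper play no role until the concentration arguments used for Theorems~\ref{modelSelectionMain} and~\ref{modelSelectionMain2}, where this lemma serves as the algebraic bridge between $\hat{S}(\lambda)$ and the empirical risk.
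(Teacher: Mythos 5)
Your proposal is correct and follows essentially the same route as the paper: both hinge on the pointwise identity $\I(z_{i}^{\lambda} \neq Y^{test}_{i,j}) = \I(z_{i}^{\lambda} \neq Z_{i})\bigl(1-2\I(Z_{i} \neq Y^{test}_{i,j})\bigr) + \I(Z_{i} \neq Y^{test}_{i,j})$ followed by averaging over $i$ and $j$. The only difference is cosmetic: you derive the identity via XOR algebra, whereas the paper derives it by splitting the event $\{z_{i}^{\lambda} \neq Y^{test}_{i,j}\}$ according to whether $Z_{i} = Y^{test}_{i,j}$.
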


\begin{proof}
For any given $i$ and $j$, 
\begin{align*}
 &\I(z_{i}^{\lambda} \neq Y^{test}_{i,j}) = \I(z_{i}^{\lambda} \neq Y^{test}_{i,j},Z_{i} = Y^{test}_{i,j}) + \I(z_{i}^{\lambda} \neq Y^{test}_{i,j},Z_{i} \neq Y^{test}_{i,j}) = \\
  &= \I(z_{i}^{\lambda} \neq Z_{i} ,Z_{i} = Y^{test}_{i,j}) + \I(z_{i}^{\lambda} = Z_{i} ,Z_{i} \neq Y^{test}_{i,j}) =  \\
  &\I(z_{i}^{\lambda} \neq Z_{i}) (1-\I(Z_{i} \neq Y^{test}_{i,j})) + (1-\I(z_{i}^{\lambda} \neq Z_{i}))\I(Z_{i} \neq Y^{test}_{i,j}) = \\
  &= \I(z_{i}^{\lambda} \neq Z_{i}) (1-2\I(Z_{i} \neq Y^{test}_{i,j})) + \I(Z_{i} \neq Y^{test}_{i,j}) 
\end{align*}

$\hat{S}(\lambda)$ is obtained averaging $\I(z_{i}^{\lambda} \neq Y^{test}_{i,j})$ over $i$ and $j$. The right hand side of the lemma is obtained averaging $\I(z_{i}^{\lambda} \neq Z_{i}) (1-2\I(Z_{i} \neq Y^{test}_{i,j})) + \I(Z_{i} \neq Y^{test}_{i,j})$ over $i$ and $j$. Hence, the proof is complete.
\end{proof}

Observe that $\frac{1}{n'}\sum_{i=1}^{n'}{\frac{1}{d}\sum_{j=1}^{d}{\I(Z_{i} \neq Y^{test}_{i,j})}}$ is constant on $\lambda$. Hence, 

\[ \arg\min_{\lambda \in \Lambda} \hat{S}(\lambda) = \arg\min_{\lambda \in \Lambda} \frac{1}{n'}\sum_{i=1}^{n'}{\frac{\sum_{j=1}^{d}{\left(1-2\I(Z_{i} \neq Y^{test}_{i,j})\right)}}{d} \I(z^\lambda_{i} \neq Z_{i})} \]

The model which minimizes $\hat{S}(\lambda)$ minimizes a weighted average of $\I(z^\lambda_{i} \neq Z_{i})$. This is similar to performing model selection through empirical risk minimization, in which the model which minimizes the arithmetic mean of $\I(z^\lambda_{i} \neq Z_{i})$ is chosen.

\begin{Lemma} 
\label{inequality1}
Under assumption \ref{assumptionIndExpertExpert}, for all $\lambda \in \Lambda$ it holds that

\[ \left|E[\hat{S}(\lambda)] - (1-2\bar{\epsilon})R(\lambda) + \bar{\epsilon} \right|\leq \frac{\sigma_{z^{\lambda}}}{\sqrt{d}}, \]

where $\sigma^2_{z^{\lambda}} = VAR\left[\I(z_{i}^{\lambda} \neq Z_{i})\right]$.
\end{Lemma}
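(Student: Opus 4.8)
The plan is to reduce the asserted bound to the control of a single covariance, and then to bound that covariance by Cauchy--Schwarz together with a variance computation that is where Assumption \ref{assumptionIndExpertExpert} does its work. Starting from the identity in Lemma \ref{scoreTransformation}, I would write $\hat{S}(\lambda)$ as the average over $i$ of $A_iB_i + C_i$, where $A_i = \frac{1}{d}\sum_{j=1}^{d}\left(1-2\I(Z_i \neq Y^{test}_{i,j})\right)$, $B_i = \I(z^\lambda_i \neq Z_i)$, and $C_i = \frac{1}{d}\sum_{j=1}^{d}\I(Z_i \neq Y^{test}_{i,j})$. All expectations are taken conditional on the training set, so that $z^\lambda$ is a fixed function and $E[B_i] = R(\lambda)$.

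Taking expectations term by term (linearity suffices, together with the identical distribution of the sample units), I obtain $E[C_i] = \frac{1}{d}\sum_j \epsilon_j = \bar{\epsilon}$ and hence $E[A_i] = 1 - 2\bar{\epsilon}$. Decomposing the cross term as $E[A_iB_i] = E[A_i]E[B_i] + \mathrm{Cov}(A_i,B_i)$ then gives
\[ E[\hat{S}(\lambda)] = (1-2\bar{\epsilon})R(\lambda) + \bar{\epsilon} + \mathrm{Cov}(A_i,B_i), \]
so that the quantity to be bounded is exactly $|\mathrm{Cov}(A_i,B_i)|$. (Here the sign on $\bar{\epsilon}$ in the lemma statement should read $-\bar{\epsilon}$, matching Theorems \ref{modelSelectionMain} and \ref{modelSelectionMain2}; this same identity also recovers Theorem \ref{modelSelectionMain}, since under Assumption \ref{assumptionIndModelExpert} the covariance vanishes outright.)

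By the Cauchy--Schwarz inequality for covariances, $|\mathrm{Cov}(A_i,B_i)| \leq \sqrt{\mathrm{Var}(A_i)}\,\sqrt{\mathrm{Var}(B_i)} = \sqrt{\mathrm{Var}(A_i)}\,\sigma_{z^\lambda}$, so it remains to show $\mathrm{Var}(A_i) \leq 1/d$. Writing $W_{ij} = 1 - 2\I(Z_i \neq Y^{test}_{i,j})$, so that $A_i = \frac{1}{d}\sum_j W_{ij}$, I expand $\mathrm{Var}(A_i) = \frac{1}{d^2}\big(\sum_j \mathrm{Var}(W_{ij}) + \sum_{j \neq j^*}\mathrm{Cov}(W_{ij},W_{ij^*})\big)$. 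Since $\mathrm{Cov}(W_{ij},W_{ij^*}) = 4\,\mathrm{Cov}(\I(Z_i \neq Y^{test}_{i,j}),\I(Z_i \neq Y^{test}_{i,j^*}))$, Assumption \ref{assumptionIndExpertExpert} forces every cross term to vanish; and each diagonal term satisfies $\mathrm{Var}(W_{ij}) = 4\epsilon_j(1-\epsilon_j) \leq 1$. Hence $\mathrm{Var}(A_i) \leq d/d^2 = 1/d$, which yields the claimed bound $\sigma_{z^\lambda}/\sqrt{d}$.

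The only nontrivial point — and thus the main obstacle — is the variance computation for the average $A_i$: the entire force of the lemma comes from recognizing that Assumption \ref{assumptionIndExpertExpert} annihilates the $O(d^2)$ off-diagonal covariance terms, leaving only the $d$ diagonal variances. This is precisely the standard mechanism by which the variance of an average of uncorrelated summands decays like $1/d$, and it is what produces the $1/\sqrt{d}$ factor in the final estimate.
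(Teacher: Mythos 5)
Your proof is correct and follows essentially the same route as the paper: both reduce the deviation of $E[\hat{S}(\lambda)]$ from $(1-2\bar{\epsilon})R(\lambda)+\bar{\epsilon}$ to a single covariance, bound it by Cauchy--Schwarz, and use Assumption \ref{assumptionIndExpertExpert} to kill the off-diagonal terms so that the variance of the experts' average is at most $1/d$. You are in fact more explicit than the paper (which omits the expectation computation showing the deviation equals $\mathrm{Cov}(A_i,B_i)$ and the conditioning on the training set), and you are right that the $+\bar{\epsilon}$ inside the absolute value in the lemma statement is a sign typo for $-\bar{\epsilon}$.
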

\begin{proof}
Let $W=\I(z_{i}^{\lambda} \neq Z_{i})$ and $V_{j}=1-2\I(Z_{i} \neq Y^{test}_{i,j})$. From Cauchy Schwartz inequality it follows that:

\[ \left|COV\left(W,\frac{\sum_j V_j}{d}\right)\right|\leq \sqrt{VAR\left[W\right]}\sqrt{VAR\left[\frac{\sum_j V_j}{d}\right]}=\sigma_{z^\lambda} \frac{1}{d}\sqrt{\sum_j VAR[V_j]}\leq \frac{\sigma_{z^\lambda}}{\sqrt{d}} \]

The conclusion follows from noticing that $\left|COV\left(W,\frac{\sum_j V_j}{d}\right)\right|$ is the left term
of the inequality presented.
\end{proof}

Hence, if we can conlude that $\hat{S}(\lambda)$ is close to its mean, since its mean is close to $(1-2\bar{\epsilon})R(\lambda) + \bar{\epsilon}$, we establish that minimizing $\hat{S}(\lambda)$ is close to minimizing $R(\lambda)$. The following result proves that $\hat{S}(\lambda)$ is close to its mean.

\begin{Lemma}  
\label{uniformConvergence}
If $\Lambda$ is a VC-Class then,

\[ P(\sup_{\lambda \in \Lambda}|\hat{S}(\lambda)-E[\hat{S}(\lambda)]| > \delta) \leq \left(\frac{D\sqrt{n'}\delta}{\sqrt{2VC(\Lambda)}}\right)^{2 VC(\Lambda)} e^{-2n'\delta^{2}} \]
\end{Lemma}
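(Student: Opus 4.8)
The plan is to recognize the stated inequality as a uniform deviation bound for an empirical process indexed by the VC-class $\Lambda$, and then to invoke the corresponding VC inequality from \cite{Vaart}, whose universal constant is the $D$ appearing in the theorem statements. First I would rewrite $\hat{S}(\lambda)$ as a genuine empirical mean over the $n'$ test units. Setting
\[ f_\lambda(x,y) := \frac{1}{d}\sum_{j=1}^{d}\I(z^\lambda(x) \neq y_j), \]
we have $\hat{S}(\lambda) = \frac{1}{n'}\sum_{i=1}^{n'} f_\lambda(X_i^{test},Y_i^{test})$, and by Assumption \ref{assumptionInd} the summands are i.i.d.\ with common mean $E[\hat{S}(\lambda)]$. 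Each $f_\lambda$ takes values in $[0,1]$, so for a single fixed $\lambda$ Hoeffding's inequality already produces the factor $e^{-2n'\delta^{2}}$; the entire content of the lemma is to upgrade this pointwise bound to one that holds uniformly over $\lambda \in \Lambda$.

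The key structural observation is that $f_\lambda$ depends on $\lambda$ only through the binary classifier $z^\lambda$. Writing $\bar{y} := \frac{1}{d}\sum_{j=1}^{d} y_j$, a short computation (splitting on whether $z^\lambda(x)$ equals $0$ or $1$) gives
\[ f_\lambda(x,y) = \bar{y} + z^\lambda(x)\left(1-2\bar{y}\right), \]
so that, on any fixed realization of the sample, the collection of values $\{f_\lambda\}_{\lambda \in \Lambda}$ is a per-unit affine image of the values $\{z^\lambda\}_{\lambda \in \Lambda}$. Consequently the function class $\mathcal{F} := \{f_\lambda : \lambda \in \Lambda\}$ inherits the combinatorial complexity of the classifier class: its covering numbers, equivalently its VC-index, are controlled by $VC(\Lambda)$. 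This is the step that allows the VC-dimension of $\Lambda$ itself, rather than some larger quantity, to be the one that appears in the final bound.

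With $\mathcal{F}$ identified as a $[0,1]$-valued VC-class with envelope $1$ and VC-index governed by $VC(\Lambda)$, I would then apply the uniform deviation inequality for VC-classes from \cite{Vaart}. That result bounds $P(\sup_{\lambda \in \Lambda}|\hat{S}(\lambda)-E[\hat{S}(\lambda)]| > \delta)$ by a metric-entropy prefactor of the form $\bigl(D\sqrt{n'}\,\delta/\sqrt{2VC(\Lambda)}\bigr)^{2VC(\Lambda)}$ multiplied by the Hoeffding factor $e^{-2n'\delta^{2}}$, which is exactly the claimed right-hand side. The main obstacle is precisely the matching carried out in the previous step: one must verify that the hypotheses of the cited theorem are genuinely met for $\mathcal{F}$, that the affine reparametrization $f_\lambda = \bar{y} + z^\lambda(1-2\bar{y})$ does not inflate the combinatorial dimension beyond $VC(\Lambda)$, and that the polynomial prefactor is reproduced with the stated constant $D$. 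Once these bookkeeping points are confirmed, the conclusion is immediate.
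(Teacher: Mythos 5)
Your proposal is correct and follows essentially the same route as the paper: both rewrite $\hat{S}(\lambda)$ as an empirical mean of a per-unit affine image of the classifier class (you use $f_\lambda=\bar{y}+z^\lambda(1-2\bar{y})$; the paper uses $W^\lambda(1-2V)+V$ via its Lemma \ref{scoreTransformation}), note that the multiplier is bounded by $1$ so the covering numbers of this class are dominated by those of $\{z^\lambda\}$ (resp.\ $\{W^\lambda\}$), and then invoke the VC covering-number bound from \cite{Vaart} together with the Hoeffding factor. The only difference is cosmetic --- your affine reparametrization is stated directly in terms of the observed labels rather than through the latent $Z_i$ --- and your proof is at the same level of detail as the paper's own sketch.
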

\begin{proof} 
Using Lemma \ref{scoreTransformation},

\[ \hat{S}(\lambda) = \frac{1}{n'}\sum_{i=1}^{n'}{\left(\frac{\sum_{j=1}^{d}{\left(1-2\I(Z_{i} \neq Y^{test}_{i,j})\right)}}{d}\right)} \I(z^\lambda_{i} \neq Z_{i}) + \frac{1}{n'}\sum_{i=1}^{n'}{\frac{1}{d}\sum_{j=1}^{d}{\I(Z_{i} \neq Y^{test}_{i,j})}} \]

Define $V_{i} = \frac{1}{d}\sum_{j=1}^{d}{\I(Z_{i} \neq Y^{test}_{i,j})}$ and $W_{i}^{\lambda} = \I(z^\lambda_{i} \neq Z_{i})$. Thus,

\[ \hat{S}(\lambda) = \frac{1}{n'}\left(\sum_{i=1}^{n'}{W_{i}^{\lambda}(1-2V_{i}) + V_{i}}\right) \]

We wish to prove that the central limit theorem holds uniformly on $S[\Lambda] = \{W^{\lambda}(1-2V) + V : \lambda \in \Lambda\}$. Let $N(\mathcal{F},\epsilon,L^{2}(Q))$ be the $L^{2}(Q)$ covering number of a class of functions, $\mathcal{F}$. Call $R[\Lambda] = \{W^{\lambda}: \lambda \in \Lambda\}$. Note that, since $|1-2V| \leq 1$, for every distribution $Q$, $N(S[\Lambda],\epsilon,L^{2}(Q)) \leq N(R[\Lambda],\epsilon,L^{2}(Q))$. Let $VC(\Lambda)$ be the VC-dimension of $\Lambda$. From \cite{Vaart}, $\sup_{Q}{N(R[\Lambda],\epsilon,L^{2}(Q))} \leq K \cdot VC(\Lambda) (4e)^{VC(\Lambda)} \left(\frac{1}{\epsilon}\right)^{2(VC(\Lambda)-1)}$. Hence, there exists a constant $D$, such that,

\[ P(\sup_{\lambda \in \Lambda}|\hat{S}(\lambda)-E[\hat{S}(\lambda)]| > \delta) \leq \left(\frac{D\sqrt{n'}\delta}{\sqrt{2VC(\Lambda)}}\right)^{2 VC(\lambda)} e^{-2n'\delta^{2}} \]
\end{proof}

Finally, putting together lemmas \ref{inequality1} and \ref{uniformConvergence}, we get Theorems \ref{modelSelectionMain} and \ref{modelSelectionMain2}.

\end{document}